
\documentclass[11pt]{article}
\usepackage{latexsym}
\usepackage{mathrsfs}
\usepackage{amsmath,amsthm}
\usepackage{graphicx,epstopdf,epsfig,multirow,epic,bm}
\usepackage{color,fancyvrb}
\usepackage{colortbl}
\usepackage{multicol}
\usepackage{amssymb}

\usepackage{esint}  

\usepackage{makeidx}
\usepackage{showidx}
\usepackage{textcomp}
\usepackage{paralist}

\usepackage{CJK}

\pagestyle{headings}

\theoremstyle{plain}
\newtheorem{thm}{Theorem}

%
\theoremstyle{definition}

%
\theoremstyle{plain}

\theoremstyle{problem}

\theoremstyle{plain}

\theoremstyle{plain}

\theoremstyle{plain}


\DeclareGraphicsExtensions{.eps,.eps.gz}
\DeclareGraphicsRule{*}{eps}{*}{} \oddsidemargin=0 cm
\topmargin=0 cm
\textwidth=16.5 true cm \textheight=22 true cm



\normalsize \rm
\parindent=16pt
\DeclareGraphicsRule{*}{eps}{*}{}

\linespread{1.2}



\begin{document}

\thispagestyle{empty}

\begin{center}
{\Large\bf Maximum Leaf Spanning Trees of Growing Sierpinski Networks Models}\\[12pt]
{\large Bing \textsc{Yao}$^{1}$, Xia \textsc{Liu}$^{1}$, Jin \textsc{Xu}$^{2}$}
\end{center}
\begin{flushleft}
{\footnotesize 1. College of Mathematics and Statistics, Northwest
Normal University, Lanzhou, 730070, China\\
2. School of Electronics Engineering and Computer Science, Peking University, Beijing, 100871, China}
\end{flushleft}

\begin{abstract}
The dynamical phenomena of complex networks are very difficult to predict from local information due to the rich microstructures and corresponding complex dynamics. On the other hands, it is a horrible job to compute some stochastic parameters of a large network having thousand and thousand nodes. We design several recursive algorithms for finding spanning trees having maximal leaves (MLS-trees) in investigation of topological structures of Sierpinski growing network models, and use MLS-trees to determine the kernels, dominating and balanced sets of the models. We propose a new stochastic method for the models, called the edge-cumulative distribution, and show that it obeys a power law distribution.
\textbf{Keywords:} Spanning trees, scale-free, Sierpinski, algorithm\\
PACS 89.75.Da, 05.45.Df, 02.10.Ox, 89.75.Fb
\end{abstract}

\section{Introduction}

In understanding complex networks, one must know the global properties of networks as well as the local properties such as the degree distribution. Bollob\'{a}s and Riordan Ref.\cite{Bollobas-Riordan2004} consider a random graph process in which vertices are added to the graph one at a time and joined to a fixed number $m$ of earlier vertices, where each earlier vertex is chosen with probability proportional to its degree. This process was introduced by Barab\'{a}si and Albert in Ref. \cite{A-L-Barabasi-R-Albert1999}, as a simple model of the growth of real-world graphs such as the world-wide web.

It is well known that the dynamical phenomena of networks such as traffic and information flow are very difficult to predict from local information due to the rich microstructures and corresponding complex dynamics (Ref. \cite{Kim-Noh-Jeong2004}). On the other hands, it is difficult to implement miscellaneous measurements on complex networks, such as the betweenness centrality (BC) $b(\bullet)=\sum_{i\neq j} \frac{b(i,\bullet,j )}{b(i,j)}$, where $\bullet$ is a node or an edge, and $b(i,\bullet,j )$ is the number of shortest $(i,j)$-paths through $\bullet$ and $b(i,j)$ is the number of shortest $(i,j)$-paths. Clearly, this is a horrible job to compute the betweenness centrality of a network having thousand and thousand nodes. But Kim \emph{et al.} (Ref. \cite{Kim-Noh-Jeong2004}), after studying the properties of the spanning trees with maximum total edge betweenness centrality, point out that the scale-free spanning trees represent the \emph{communication kernels} on networks, and the scale-free spanning trees show robust characteristics in the degree correlation and the betweenness centrality distribution. They defined the communication kernel of a network as the spanning tree with a set of edges maximizing the summation of their edge BC's on the original networks, and investigated the structural and dynamical properties of the spanning tree of complex networks and the role of shortcuts in the networks, and find that the spanning trees show scale-free behavior in the degree distributions.

As known, the Maximum Leaf Spanning Tree (MLS-tree) problem, which asks to find, for a given graph, a spanning tree with as many leaves as possible, is one of the classical NP-complete problems in Ref. \cite{Garey-Johnson1979}. Fernau \emph{et al.} (Ref. \cite{Fernau-Kneis-Kratsch-Langer-Liedloff-Raible-Rossmanith2011}) investigated MLS-trees based on an exponential time viewpoint that is equivalent to the Connected Dominating Set problem (CDSP), and present a branching algorithm whose running time of $O(1.8966^n)$ has been analyzed using the Measure-and-Conquer technique as well as a lower bound of
$\Omega(1.4422^n)$ for the worst case running time of their algorithm. By means of MLS-trees of growing networks in Ref. \cite{Yao-Liu-Zhang-Chen-Zhang-Yao-Zhao2013}, a \emph{stochastic network model} can be defined as
$M_t=(p(u, k,t)$, $G(t)$, $UG)$ for $t\in [a, b]$, where $UG$ is the \emph{underlying graph} of $M_t$ that contains
all nodes and links appeared in $M_t$ for $t\in [a,b]$; $p(u, k,t)$ is the \emph{probability} of a node $u$ being
connected with other $k$ nodes in $M_{t'}~(t\,'\in [a,t))$; $G(t)$ is the \emph{connected topological graph} of $
M_t$. A node of $M_t$ is called an \emph{alltime-hub node} if it is not
a leaf of any MLS-tree of $M_t$ at $t\in [a,b]$. A \emph{kernel} of $M_t$ is an induced graph over the set of alltime-hub nodes.

We show our algorithms to find  MLS-trees for investigating topological structures of growing Sierpinski network models (GSN-models) introduced in Ref. \cite{Zhang-Zhou-Fang-Guan-Zhang2007}. All graphs mentioned here are simple, undirected and finite. A \emph{leaf} is a node of degree one. For a graph $G$, we let $L(G)$ stand for the set of its leaves, and $D(G)$ be the diameter of $G$, that is, $D(G)=\min_{i\neq j}\{d(i,j)\}$, where $d(i,j)$ is the \emph{geodesic distance} from node $i$ to node $j$. Notation $|X|$ is the number of elements of a set $X$.

\section{GSN-models}

The graph $O$ in Figure \ref{fig.1} shows the result of a \emph{fractal-operation} that will be used in the following. For a given triangle $\Delta ABC$ with three nodes $A,B,C$ in the plane, we embed another triangle $\Delta abc$ with three nodes $a,b,c$ in the inner face of $\Delta ABC$, and then join $A$ with $b$ and $c$ to produce two edges $Ab,Ac$; join $B$ with $c$ and $a$ to form two edges $Bc,Ba$; and join $C$ with $a$ and $b$ to generate two edges $Ca,Cb$. The resulting graph $O$ is called the \emph{base}, and it has six inner triangles that bound six regions $I$, $III'$ $II$, $I'$, $III$, $II'$ in the clockwise direction. For simpler statement, we call $A,B,C$ three \emph{major nodes} of the base $O$, where $A$ is called the \emph{left major node}, $B$ the \emph{top major node}, and $C$ the \emph{right major node}, and we call $a,b,c$ to be three \emph{submajor nodes}. We restate the construction of the networks $N(t)$ shown in Ref. \cite{Zhang-Zhou-Fang-Guan-Zhang2007} by adding a labelling function below.

\begin{figure}[h]
\centering
\includegraphics[height=3.8cm]{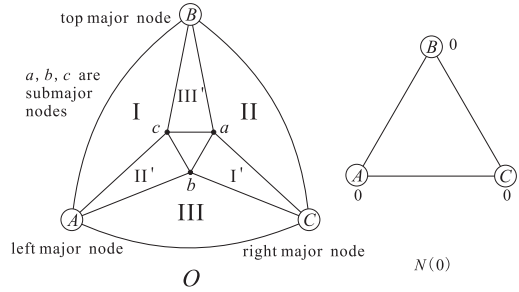}
\caption{\label{fig.1} The left is for illustrating a fractal-operation, and the right is the initial network.}
\end{figure}

\subsection{Construction of GSN-models} Let $N(0)$ be the initial network pictured in Figure \ref{fig.1}, and let $V(0)$ be the node set of $N(0)$. We define a labelling $f$ such that $f(\alpha)=0$ for each node $\alpha\in V(0)$. Do a fractal-operation to the inner face of $N(0)$ by adding a new triangle produces the second GSN-model $N(1)$, and label $f(\beta)=1$ for every node $\beta\in V(1)\setminus V(0)$. To form the third  GSN-model $N(2)$ from $N(1)$, we do a fractal-operation to each inner triangle $\Delta uvw$ of $N(1)$ without $f(u)=f(v)=f(w)$, and label each node $x\in V(2)\setminus V(1)$ as $f(x)=2$. Go on in this way, every GSN-model $N(t)$ can be obtained from the previous GSN-model $N(t-1)$ for $t\geq 2$ by doing a fractal-operation to each inner triangle $\Delta xyz$ of $N(t-1)$ without $f(x)=f(y)=f(z)$, and label each node $w\in V(t)\setminus V(t-1)$ as $f(w)=t$.

Let $n_v(t)$ and $n_e(t)$ be the numbers of nodes and edges of the network $N(t)$, respectively. Clearly, each $N(t)$ has an outer face $\Delta ABC$ and an inner face $\Delta abc$ for $t\geq 1$ (see Figure \ref{fig.2}). Another way to generate $N(t)$ from  $N(t-1)$ is the generalized MLS-TREE algorithm introduced later.

\begin{figure}[h]
\centering
\includegraphics[height=6cm]{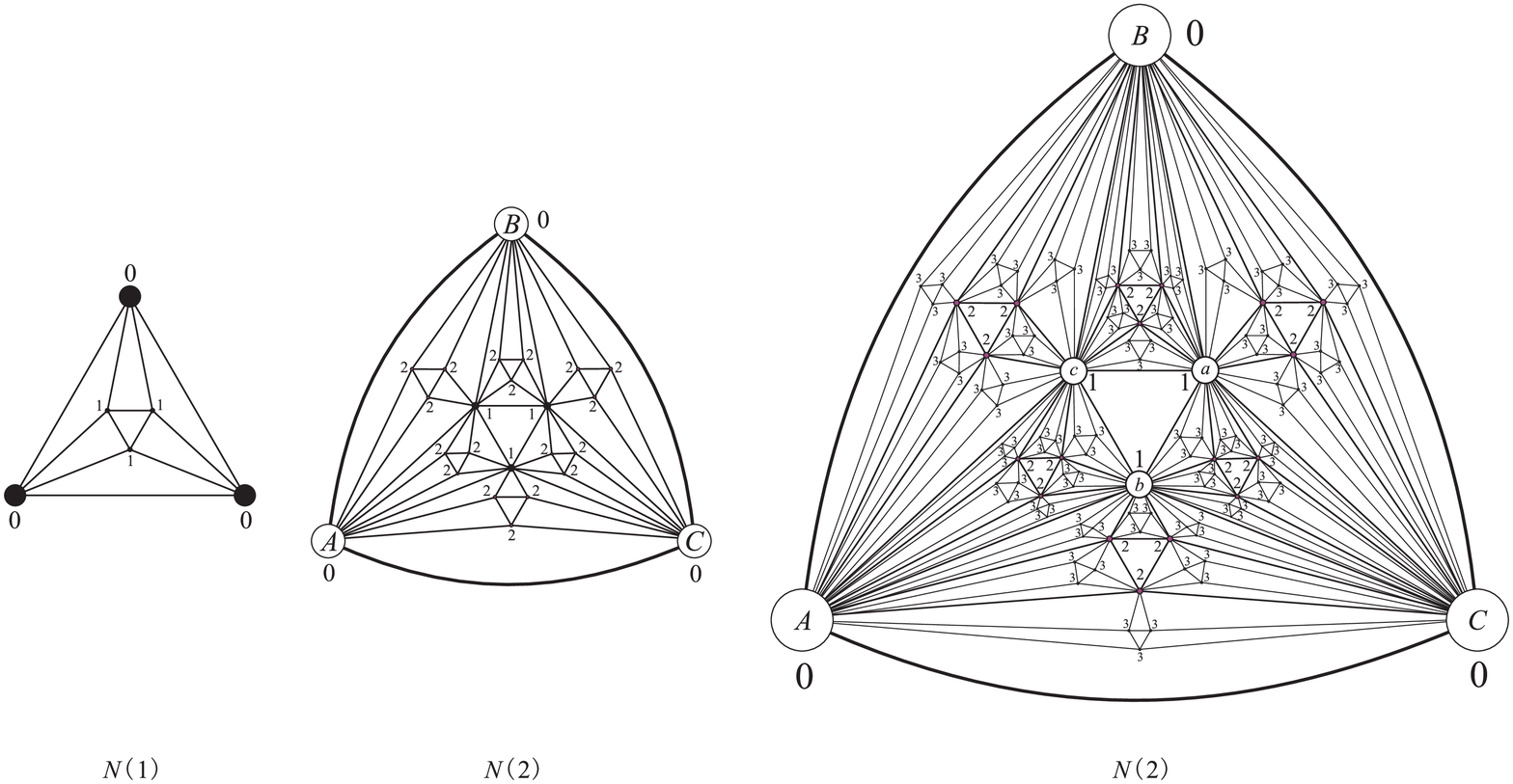}
\caption{Three  GSN-models  $N(1)$, $N(2)$ and $N(3)$.}
\label{fig.2}
\end{figure}

\section{Deterministic statistics of GSN-models}

Four first GSN-models $N(0)$, $N(1)$, $N(2)$ and $N(3)$ are shown in Figures \ref{fig.1} and \ref{fig.2}, respectively.  Notation $n_d(t)$ stands for the number of nodes of degree $d$ and let $\Delta(t)$ be the maximum degree in $N(t)$. For $t\geq 2$, the \emph{degree spectrum} of $N(t)$ is that each number $n_d(t)$ of nodes of degree $d=1+3^{k}$ is equal to $3\cdot 6^{t-k}$ for $k=1,2,\dots ,t-1$, respectively; and the number of nodes of maximum degree $\Delta(t)=1+3^{t}$ is $n_{\Delta(t)}(t)=6$ that are the major nodes $A,B,C$ and the subnodes $a,b,c$. It is easy to show $n_v(t)$ in the formula (\ref{eqa:nodes-edges}) by the degree spectrum of $N(t)$. Since $N(t)$ is a maximal planar graph, so $n_e(t)=3n_v(t)-6$,
\begin{equation}\label{eqa:nodes-edges}
n_v(t)=(3\cdot 6^t+12)/5,\quad n_e(t)=(9\cdot 6^t+6)/5
\end{equation}
as well as there are $n_e(t)-n_v(t)+1$ inner triangles in $N(t)$ by the famous Euler's formula on planar graphs. By the degree spectrum of $N(t)$ and by the \emph{linear preferential attachment rule} (Ref.\cite{A-L-Barabasi-R-Albert1999}, \cite{Costa-Rodrigues-Travieso-Villas-Boas2006}), we can get the probability of joining a new node $u$ out of $N(t)$ to a node $v_{d(t)}$ of degree $d(t)$ in $N(t)$ as follows, $d(t)=k$,
\begin{equation}\label{eqa:linear-preferential-attachment-rule}
 P(u\rightarrow v_{d(t)})=\frac{k(v_{d(t)})}{\sum_{w\in V(t)} k(w)}=\frac{3^{k}+1}{2n_e(t)}\approx\frac{5}{18}\cdot \frac{3^{k}}{6^t},
\end{equation}
where $k(x)$ is the degree of node $x$ in $N(t)$. Again we obtain $3^{t-k}\cdot P(u\rightarrow v_{d(t)})=P(u\rightarrow v_{\Delta(t)})$ for large integers $t>0$, which means that the nodes of large degrees play the role like hubs connecting the whole network together. This phenomenon is known as ``the rich get richer'' paradigm.

The \emph{average (mean) degree} $\langle k\rangle$ of $N(t)$ is defined as
\begin{equation}\label{eqa:average-degree}
\langle k\rangle=\frac{2n_e(t)}{n_v(t)}=\frac{1}{n_v(t)}\sum _{v\in V(t)}k(v).
\end{equation}
The \emph{average-square (mean-square) degree} $\langle k^2\rangle$ of $N(t)$ is determined by $\langle k^2\rangle=\frac{1}{n_v(t)}\sum _{v\in V(t)}k^2(v)$. Newman \cite{M-E-J-Newman2008} pointed that a giant component exists in the network if and only if $\langle k^2\rangle -2\langle k\rangle > 0$. We verify that $N(t)$ holds $\langle k^2\rangle -2\langle k\rangle > 0$ when $t\geq 2$.

\subsection{The edge-cumulative distribution} Motivated from the cumulative degree distribution that is an important character of scale-free networks (Refs. \cite{M-E-J-Newman2003}, \cite{Zhang-Zhou-Fang-Guan-Zhang2007}), we propose a deterministic statistic for $2<t_i<t$, named as the \emph{edge-cumulative distribution} $P_{\textrm{e-cum}}(k)$, as follows
\begin{equation}\label{eqa:4444}
{
\begin{split}
&\quad P_{\textrm{e-cum}}(k)=\frac{1}{n_e(t)}\sum^{t_i}_{j=0} n_e(j)\\
&=\frac{1}{9\cdot 6^t+6}\sum^{t_i}_{j=0} (9\cdot 6^j+6)\\
&=\frac{1}{9\cdot 6^t+6}\left [15+6t_i+\frac{54}{5}(6^{t_i}-1)\right ]\approx \frac{6}{5}6^{t_i-t}
\end{split}}
\end{equation}
Plugging $t_i=t-\frac{\ln k}{\ln 3}$ into Eq. (\ref{eqa:4444}) leads to $P_{\textrm{e-cum}}(k)\propto \frac{6}{5} k^{-1-\ln 2/\ln 3}$, which means that $P_{\textrm{e-cum}}(k)$ follows a power law form with the exponent $\gamma_k=1+\frac{\ln 2}{\ln 3}$.

\subsection{$(\alpha_k,\beta_k)$-GSN models} By the degree spectrum of a GSN-model $N(t)$, for $4\leq d\leq 3^{k}+1$,  adding numbers of nodes of degree $d\leq k$ together is $S(\leq k)=\sum_{d\leq k}n_d(t)=\sum^k_{i=1}3\cdot 6^{t-i}=\frac{3}{5}6^{t-k}(6^{k}-1)$, and adding  degrees of nodes of degree $d\leq k$ is equal to $D(\leq k)=\sum_{d\leq k}d\cdot n_d(t)=\sum^k_{i=1}3\cdot 6^{t-i}(3^{i}+1)=\frac{3}{5}6^{t}(6-\frac{1}{6^{k}}-\frac{5}{2^{k}})$. Thereby, the sum $S(\geq k+1)$ of numbers of nodes of degree $d$ with $3^{k+1}+1\leq d\leq 3^{t}+1$ is $S(\geq k+1)=n_v(t)-S(\leq k)$, and the sum $D(\geq k+1)$ of their degrees is equal to $D(\geq k+1)=2n_e(t)-D(\leq k)$. If the \emph{node-number proportion} $\frac{S(\geq k+1)}{n_v(t)}=\alpha_k$, so we have
\begin{equation}\label{eqa:c3xxxxx}
\alpha_k=\frac{3\cdot 6^{t-k}+12}{3\cdot 6^{t}+12}\ \sim \ 6^k=\frac{1}{\alpha_k}.
\end{equation}
we get $k=-\frac{\ln \alpha_k}{\ln 6}$. From the node-number proportion $\frac{S(\leq k)}{n_v(t)}=1-\alpha_k$, we solve $ k=-\frac{\ln \alpha_k}{\ln 6}$ too. We call $N(t)$ an \emph{$(\alpha_k,\beta_k)$-GSN-model}, where $\beta_k=\frac{D(\geq k+1)}{2n_e(t)}=1-\frac{D(\leq k)}{2n_e(t)}$, and furthermore $S(\geq k+1)D(\geq k+1)=2\alpha_k \beta_k n_v(t)n_e(t)$. As a test, we take $\alpha_k=\frac{1}{2}\cdot \frac{1}{10^6}$, so $k\approx 8.0974$ and the \emph{node-degree proportions} are
\begin{equation}\label{eqa:new-statistics}
\frac{D(\leq k)}{2n_e(t)} \approx 1-\frac{1}{6}\left (\frac{1}{6^{k}}+\frac{5}{2^{k}}\right )\approx 0.997,
\end{equation}
and $\beta_k \approx 0.003$. The parameters $\alpha_k,\beta_k$ show a description of a GSN-model as: The nodes having degrees$\leq k$ show a powerful controlling almost edges of $N(t)$ as $k$ is smaller. Conversely, the nodes with degrees$\geq k+1$ are incident to fewer edges of $N(t)$, however, they connect the the network model together. For example, deleting the nodes of $N(k)$ from $N(t)$ with $t>k\geq 1$ produces $6^{k-1}$ fragments. Moreover, the deletion of nodes of $V(t-1)$ from $N(t)$ make the remainder consisting of $6^{t-1}$ components (triangles) in which the total number of nodes is equal to $3\cdot 6^{t-1}=5n_v(t-1)-12$. Obviously, such a deletion destroys the network made by a GSN-model into pieces, since $3\cdot 6^{t-1}/n_v(t)=\frac{5}{6}$.

\section{MLS-trees of GSN-models}

In this section we will determine the \emph{kernels} of GSN-models by MLS-trees in  the models, and furthermore we present several algorithms for finding MLS-trees. The notation $T^M(t)$ denotes an MLS-tree of $N(t)$, and $L(T^M(t))$ stands for the set of leaves of $T^M(t)$. It is easy to verify that $|L(T^M(0))|=2$, $|L(T^M(1))|=4$ and $|L(T^M(2))|=19$.

\subsection{Construction of MLS-trees of GSN-models}

Let $M_{st}(t)$ be the set of MLS-trees of $N(t)$ such that three major nodes $A,B,C$ of every MLS-tree of $M_{st}(t)$ are not leaves, and every one of $M_{st}(t)$ has two edges $AB,BC$ and has no the edge $AC$. Since an MLS-tree of $N(t)$ can be constructed by different MLS-trees of $N(t-1)$, so we call them \emph{non-uniformly MLS-trees}.

\begin{flushleft}
\textbf{Generalized MLS-TREE algorithm}
\end{flushleft}

\textbf{Input:} A GSN-model $N(t)$ for $t\geq 3$, $M_{st}(2)=\{T^M_{i}(2):i=1,2,\dots, m_2\}$, where each $i$ is called a \emph{footscript}, and $m_2$ is the number of elements of $M_{st}(2)$.

\textbf{Output:} $M_{st}(t)$.

1. $M_{st}(2)\leftarrow M_{st}(2)$, $P_2\leftarrow \{(k_1,k_2,\dots ,k_6)\}$, where $(k_1,k_2,\dots ,k_6)$ are permutations over all footscripts of MLS-trees of $M_{st}(2)$; $i\leftarrow 2$.

2. If $i<t$, $M_{st}(i+1)\leftarrow \emptyset$, go to 3; otherwise $M_{st}(t)\leftarrow M_{st}(i)$, go to 5.

3. If $P_i\neq \emptyset $, go to 4; otherwise $P_{i+1}\leftarrow \{(k_1$, $k_2$, $\dots $, $k_6)\}$, where $(k_1,k_2,\dots ,k_6)$ is a permutation of footscripts of MLS-trees of $M_{st}(i)$; $i\leftarrow i+1$ go to 4.

4. Take a permutation $Q\in P_i$, \textbf{do} (Ref. the graph $O$ of Figure \ref{fig.1} and six graphs shown in  Figure \ref{fig.3}):

(a1) rename $T^M_{k_1}(i)\in M_{st}(i)$ as $I$-$T^M_{k_1}(i)$ and set its left major node $c\leftarrow A$, its top major node $A\leftarrow B$, and its right major node $B\leftarrow C$;

(a2) delete two edges $AB,BC$ of $T^M_{k_2}(i)\in M_{st}(i)$, and name the remainder as $III'$-$T^M_{k_2}(i)$, and then set its left major node $c\leftarrow A$, its top major node $B\leftarrow B$, and its right major node $a\leftarrow C$;

(a3) rename $T^M_{k_3}(i)\in M_{st}(i)$ as $II$-$T^M_{k_3}(i)$, and set its left major node $a\leftarrow A$, its top major node $B\leftarrow B$, and its right major node $C\leftarrow C$;

(a4) delete two edges $AB,BC$ of $T^M_{k_4}(i)\in M_{st}(i)$ and, name the remainder as $I'$-$T^M_{k_4}(i)$, and set its left major node $a\leftarrow A$, its top major node $C\leftarrow B$, and its right major node $b\leftarrow C$;

(a5) rename $T^M_{k_5}(i)\in M_{st}(i)$ as $III$-$T^M_{k_5}(i)$, and set its left major node $b\leftarrow A$, its top major node $C\leftarrow B$, and its right major node $A\leftarrow C$;

(a6) delete its two edges $AB,BC$ of $T^M_{k_6}(i)\in M_{st}(i)$ and, name the remainder as $II'$-$T^M_{k_6}(i)$, and set its left major node $b\leftarrow A$, its top major node $A\leftarrow B$, and its right major node $c\leftarrow C$.

Identify the major nodes of the above six graphs $I$-$T^M_{k_1}(i)$, $III'$-$T^M_{k_2}(i)$, $II$-$T^M_{k_3}(i)$, $I'$-$T^M_{k_4}(i)$, $III$-$T^M_{k_5}(i)$ and $II'$-$T^M_{k_6}(i)$ having the same letters into one node, respectively. The resulting is just an MLS-tree $T^M(i+1)$ of $N(i+1)$, and $T^M(i+1)$ has three submajor nodes $a,b,c$ and three major nodes $A,B,C$ (Ref. Figure \ref{fig.MLS-TREE-algorithm}).

Let $f(A)\leftarrow 0$, $f(B)\leftarrow 0$, $f(C)\leftarrow 0$, $f(a)\leftarrow 1$, $f(b)\leftarrow 1$, $f(c)\leftarrow 1$; $f(x)\leftarrow f(x)+1$ for $x\in V(T^M(i+1))\setminus \{a,b,c,A,B,C\}$; $M_{st}(i+1)\leftarrow M_{st}(i+1)\cup \{T^M(i+1)\}$, $P_i\leftarrow P_i\setminus \{Q\}$, go to 3.

5. return $M_{st}(t)$.

\begin{figure}[h]
\centering
\includegraphics[height=5cm]{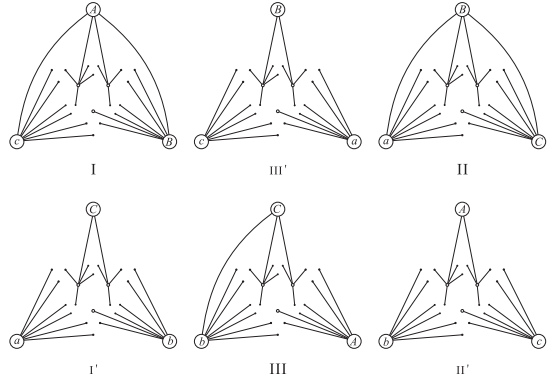}
\caption{Six graphs for illustrating the generalized MLS-TREE algorithm.}
\label{fig.3}
\end{figure}

\begin{figure}[h]
\centering
\includegraphics[height=5cm]{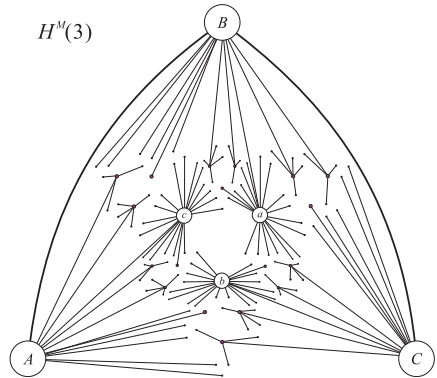}
\caption{An MLS-tree $H^M(3)$ obtained by six graphs shown in Figure \ref{fig.3} and the  generalized MLS-TREE algorithm.}
\label{fig.MLS-TREE-algorithm}
\end{figure}

\begin{thm} \label{them:MLS-tree-leaves}
Every MLS-tree of a GSN-model $N(t)$ for $t\geq 3$ has $19\cdot 6^{t-2}$ leaves.
\end{thm}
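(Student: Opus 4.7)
The plan is to proceed by induction on $t\ge 2$, with the base case $|L(T^M(2))|=19$ recorded in the excerpt and the inductive step establishing the recurrence $|L(T^M(t+1))|=6\,|L(T^M(t))|$; this yields $|L(T^M(t))|=19\cdot 6^{t-2}$ for every $t\ge 2$, and in particular for the claimed range $t\ge 3$. The induction rides on the recursive decomposition of $N(t+1)$ into the six subregions $I$, $III'$, $II$, $I'$, $III$, $II'$ of the base $O$ in Figure \ref{fig.1}, each a copy of $N(t)$ that meets the others only at the six boundary nodes $A,B,C,a,b,c$.

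For the lower bound $|L(T^M(t+1))|\ge 6\,|L(T^M(t))|$, I would take any six trees from $M_{st}(t)$ and feed them into the generalized MLS-TREE algorithm, then tally leaves piece by piece. In each of the three intact pieces (a1), (a3), (a5) all $|L(T^M(t))|$ leaves survive, because leaves of a tree in $M_{st}(t)$ are never major nodes and hence are never affected by the identification. In each of the three modified pieces (a2), (a4), (a6) the deleted edges $AB$ and $BC$ are incident only to the non-leaf major nodes of the piece, so again every original leaf survives. A short degree check using the definition of $M_{st}(t)$ shows that each of the six identified nodes $A,B,C,a,b,c$ receives strictly positive degree contributions from three distinct pieces, summing to at least $4$; in particular none of them becomes a leaf. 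Thus the algorithm delivers a spanning tree with exactly $6\,|L(T^M(t))|$ leaves.

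For the reverse inequality I would take an arbitrary MLS-tree $T$ of $N(t+1)$ and restrict it to each subregion $R$; after reinstating the appropriate local boundary edges of the outer triangle of $R$, this restriction should become a spanning tree of a copy of $N(t)$. The inductive hypothesis then bounds its leaves by $|L(T^M(t))|$, and summing across the six subregions gives $|L(T)|\le 6\,|L(T^M(t))|$, which matches the lower bound.

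The principal obstacle is this upper bound: one has to prove that every MLS-tree of $N(t+1)$ genuinely factors through spanning trees on the six subregions whose boundary pattern is (essentially) that of $M_{st}(t)$. I expect this to need a case analysis on which of the boundary edges at $\Delta ABC$ and $\Delta abc$ appear in $T$, together with a local exchange argument: whenever $T$ violates the $M_{st}(t)$-pattern on some subregion, one should be able to reroute edges inside that subregion to produce a spanning tree with no fewer leaves that does obey the pattern, forcing the required bound by the maximality of $T$. Once this structural lemma is in hand, the recurrence $|L(T^M(t+1))|=6\,|L(T^M(t))|$ and the closed form $19\cdot 6^{t-2}$ follow at once.
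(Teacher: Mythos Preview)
Your approach is essentially the same as the paper's: both argue by induction on $t$, obtain the lower bound by running the generalized MLS-TREE algorithm on six copies from $M_{st}(t)$, and obtain the upper bound by decomposing an arbitrary spanning tree of $N(t+1)$ across the six copies of $N(t)$ and invoking the inductive hypothesis via pigeonhole. The paper's treatment of the upper bound is in fact terser than yours---it simply asserts that a spanning tree $G$ of $N(t)$ ``induces six spanning trees $G_i$ of $N_i(t-1)$'' without further justification---so the structural lemma you flag as the principal obstacle is precisely the step the paper leaves to the reader.
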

\begin{proof}  It is not hard to verify that each MLS-tree of $N(t)$ has  $19\cdot 6^{k-2}$ leaves for $k=2,3$. So, we can use the induction on time steps $t\geq 3$. In the generalized MLS-TREE algorithm, we can select an MLS-tree $T^M(3)$ having $19\cdot 6^{3-2}$ leaves and take $k_i=k_j$ for all permutations $(k_1,k_2,\dots ,k_6)$ at each time step to obtain an MLS-tree $T^M(t)$ having $19\cdot 6^{t-2}$ leaves for $t\geq 4$.
Note that $N(t)$ is the resulting of overlapping six models $N_i(t-1)$ ($\cong N(t-1)$) for $i=1,2,\dots,6$, according to the generalized MLS-TREE algorithm.
If $N(t)$ has a spanning tree $G$ such that $|L(G)|>19\cdot 6^{t-2}$, so $G$ induces six spanning trees $G_i$ of $N_i(t-1)$ for $i=1,2,\dots,6$ by the generalized MLS-TREE algorithm. We have one spanning tree $G_i$ with $|L(G_i)|>19\cdot 6^{t-2}$, which contradicts with the induction hypothesis.
\end{proof}

The number $|L(T^M(t))|$ of leaves of a MLS-tree $T^M(t)$ of $N(t)$ is $19\cdot 6^{t-2}=\frac{95}{18}n_v(t-1)-\frac{38}{3}$, and also greater than the number $n_v(t)-n_v(t-1)$ of adding new nodes to $N(t-1)$ since $19\cdot 6^{t-2}=6^{t-2}+n_v(t)-n_v(t-1)$.

\subsection{Balanced sets in GSN-models}

In Ref. \cite{Yao-Zhang-Wang2010}, a non-empty node subset $X$ of node set $V(t)$ of a GSN-model $N(t)$ is called a \emph{$T^M$-balanced set} if
\begin{equation}\label{eqa:maxi-balanced-set-definition}
|X\cap T^M_i(t)|=|X\cap T^M_j(t)|
\end{equation}
holds for any pair of MLS-trees $T^M_i(t)$ and $T^M_j(t)$ of $N(t)$. Obviously, $V(t)$ holds Eq.(\ref{eqa:maxi-balanced-set-definition}) true, but it is trivial. Our goal is to find $T^M$-balanced sets $X\neq V(t)$, and such sets are called \emph{proper $T^M$-balanced sets}.

\begin{thm} \label{them:basic-maximum-leaf-trees}
For every GSN-model $N(t)$ with $t\geq 3+k$ and $k\geq 0$, each node of $N(k)$ is not a leaf of any MLS-tree of $N(t)$.
\end{thm}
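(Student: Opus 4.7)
I would prove Theorem \ref{them:basic-maximum-leaf-trees} by induction on $t \ge k+3$, leveraging the recursive decomposition $N(t) = \bigcup_{i=1}^{6} N_i(t-1)$ of $N(t)$ as six copies of $N(t-1)$ glued along the six nodes of $V(1) = \{A,B,C,a,b,c\}$. This is the same decomposition that underlies the generalized MLS-TREE algorithm and the proof of Theorem \ref{them:MLS-tree-leaves}: each $V(1)$-node of $N(t)$ is a major node of exactly three of the six copies, while every other node of $N(t)$ lies in the interior of exactly one copy.

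The base case $t = 3$ (hence $k = 0$) would be handled by direct inspection of $N(3)$: using the algorithm and the formula $|L(T^M(3))| = 19\cdot 6 = 114$ from Theorem \ref{them:MLS-tree-leaves}, one checks that in every MLS-tree of $N(3)$ each of $A,B,C$ has tree-degree at least $2$. For the inductive step, assume the theorem holds for all pairs $(k',t')$ with $t' < t$, and let $v \in V(k)$ with $t \ge k+3 \ge 4$. I split into two cases. First, if $v \in V(1)$, then the dihedral $D_3$ action on $\Delta ABC$ and the involution swapping $\Delta ABC$ with $\Delta abc$ (both lifted from $N(1)$ to $N(t)$ by the fractal recursion) put $v$ in a single symmetry orbit, so it suffices to prove the claim for $v = A$. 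Since $A$ is a major node of exactly three of the six copies (copies I, III, II' in the algorithm's labelling), the induction hypothesis applied to $N(t-1)$ with parameter $k' = 0$ (admissible because $t-1 \ge 3$) ensures $A$ is non-leaf in every MLS-tree of each of those copies, and the reduction lemma below transfers this conclusion to $T^M(t)$ itself. Second, if $v \in V(k) \setminus V(1)$, then $v$ was introduced at time $\ge 2$ and lies in the interior of exactly one top-level copy $N_{i_0}(t-1)$; under the time-shifted isomorphism $N_{i_0}(t-1) \cong N(t-1)$, $v$ corresponds to a node $\tilde v \in V(k-1)$ of $N(t-1)$, and the hypothesis $t-1 \ge (k-1)+3$ lets the induction hypothesis with parameters $(k'=k-1,\,t'=t-1)$ identify $\tilde v$ as a non-leaf in every MLS-tree of $N(t-1)$. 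The reduction lemma then promotes this to $v$ being a non-leaf in $T^M(t)$.

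\textbf{Main obstacle.} The principal non-formal ingredient is the \emph{reduction lemma}: any MLS-tree of $N(t)$ restricts on each copy $N_i(t-1)$ to an MLS-tree of that copy (modulo the two-edge deletion at primed copies that is already built into the MLS-TREE algorithm). This is exactly the reduction implicitly invoked in the proof of Theorem \ref{them:MLS-tree-leaves} when a hypothetical spanning tree with more than $19\cdot 6^{t-2}$ leaves is said to ``induce six spanning trees of the $N_i(t-1)$''. Its careful formalization requires tracking how the $19\cdot 6^{t-2}$ leaves of $T^M(t)$ are forced to partition across the six copies and controlling which crossing/outer edges can appear; this is where the substantive combinatorics lies. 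Once the reduction lemma is in hand, the two symmetry reductions (to $A$) and the two inductive cases above go through routinely, and the finite base case at $t=3$ closes the induction.
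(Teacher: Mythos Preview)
Your plan is quite different from the paper's argument. The paper does not use the six-copy decomposition or any induction on $t$; it argues by a direct local exchange. If some $v\in V(k)$ were a leaf of an MLS-tree $T^M(t)$, then $v$ is a major vertex of a base-like subgraph $O^*\subset N(t)$, and in each of the three regions of $O^*$ incident with $v$ one can locate a generation-$t$ triangle whose three vertices are adjacent in $N(t)$ only to $v$ and to two generation-$(t-1)$ vertices $y,z$. Since $v$ is a leaf, $y$ and $z$ are forced to be non-leaves of $T^M(t)$; rewiring so that $v$ picks up two of those triangle vertices and $z$ is released produces a spanning tree with the same leaf count but with $v$ now a non-leaf, and repeating the move in the remaining two regions then gains two further leaves outright, contradicting maximality. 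This is self-contained and needs no base-case inspection of $N(3)$.

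The genuine gap in your route is the reduction lemma, and it is more serious than a bookkeeping matter. The restriction $T^M(t)\cap N_i(t-1)$ need not be a spanning \emph{tree} of $N_i(t-1)$ at all: a path in $T^M(t)$ between two interior vertices of $N_i(t-1)$ can exit through a $V(1)$-vertex, traverse another copy, and re-enter, leaving the restriction a disconnected forest. Nor are the six copies edge-disjoint (the six ``spoke'' edges $Ab,Ac,Ba,Bc,Ca,Cb$ each lie in two copies), so the leaf accounting across copies is not a clean partition. The sentence you cite from the proof of Theorem~\ref{them:MLS-tree-leaves} is only a counting heuristic and does not establish the structural statement you need; indeed, forests on a fixed vertex set can have \emph{more} leaves than any spanning tree, so even that counting step is delicate. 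Finally, your base case $t=3$ concerns a $132$-vertex graph whose MLS-trees have $114$ leaves; ``direct inspection'' is not feasible, and in practice you would have to supply exactly the local-exchange argument the paper uses. So while the recursive strategy is natural, the reduction lemma is carrying the whole proof and is neither proved nor clearly true in the form you state.
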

\begin{proof} It is not hard to see that $V(0)$ is not a proper $T^M$-balanced set of $N(1)$ and $N(2)$ (Ref. these two models $N(1)$ and $N(2)$ shown in Figure \ref{fig.2}).

\emph{Case 1.} $k=0$ and $t\geq 3$. For any MLS-tree $T^M(t)$ of $N(t)$, we say $V(0)\cap L(T^M(t))=\emptyset$ when $t\geq 3$. If it is not so, without loss of generality, the left major node $A\in V(0)\cap L(T^M(t))$, namely, the node $A$ is a leaf of a MLS-tree $T^M(t)$ of the GSN-model $N(t)$. In the region $II'$, there are triangles $\Delta x^{(r)}_iy^{(r)}_iz^{(r)}_i$ of $N(r)$ holding distances $d(A,x^{(r)}_i)=2$ and $d(A,y^{(r)}_i)=d(A,z^{(r)}_i)=1$ for $i=1,2,\dots, 3^{r-1}$ with $2\leq r\leq t$. In $N(t)$ and $N(t-1)$, there are two triangles $\Delta x^{(t)}_iy^{(t)}_iz^{(t)}_i$ and $\Delta x^{(t-1)}_jy^{(t-1)}_jz^{(t-1)}_j$ such that three nodes of $\Delta x^{(t)}_iy^{(t)}_iz^{(t)}_i$ are only connected with nodes $y^{(t-1)}_j$, $z^{(t-1)}_j$ and $A$. In other words, nodes $y^{(t-1)}_j$, $z^{(t-1)}_j$ are not leaves of $T^M(t)$. We can make another spanning tree $H$ of $N(t)$ by joining $A$ with $y^{(t)}_i,z^{(t)}_i$, joining $y^{(t-1)}_j$ with $x^{(t)}_i$, and make $z^{(t-1)}_j$ to be a leaf of the new spanning tree. Note that we can do the above work in the rest regions $I$ and $III$. Eventually, we obtain a spanning tree $H^*$ of $N(t)$ such that $|L(H^*)|\geq 2+L(T^M(t))$, which violates the definition of $T^M(t)$ having maximal leaves in $N(t)$.

\emph{Case 2.} $k\geq 1$. The constraint $t\geq 3+k$ is necessary, because there exists an MLS-tree $T^M(3)$ of the GSN-model $N(3)$ such that $(V(1)\setminus V(0))\cap L(T^M(3))\neq \emptyset$. Furthermore, based on $T^M_1(3)$, we can use the generalized MLS-TREE algorithm to obtain an MLS-trees $H(m)$ such that $(V(m-2)\setminus V(m-3))\cap L(H(m))\neq \emptyset$ with $m\geq 3$.

By contradiction to show that each node of $V(k)$ is not a leaf of any MLS-tree $T^M(t)$ with $t\geq 3+k$. Assume that $N(t)$ has an MLS-tree $T^M(t)$ such that $V(k)\cap L(T^M(t))\neq \emptyset$ for some $k$ with $1\leq k\leq t-3$. So, we have a leaf $x^{(k)}_s\in V(k)\cap L(T^M(t))$ and a triangle $\Delta x^{(k)}_sy^{(k)}_sz^{(k)}_s$. By the structure of $N(t)$, there is a triangle $\Delta x^{(k+1)}_sy^{(k+1)}_sz^{(k+1)}_s$ inside of the triangle $\Delta x^{(k)}_sy^{(k)}_sz^{(k)}_s$ such that these two triangles form a subgraph $O^*$ like the graph $O$ shown in Figure \ref{fig.1}. In this case, the position of the node $x^{(k)}_i$ is as the same as the node $A$ in Case 1. In the region $II'$ of $O^*$, there are triangles $\Delta x^{(r)}_iy^{(r)}_iz^{(r)}_i$ of $N(r)$ holding distances $d(x^{(k)}_s,x^{(r)}_i)=2$ and $d(x^{(k)}_s,y^{(r)}_i)=d(x^{(k)}_s,z^{(r)}_i)=1$ for $i=1,2,\dots, 3^{r-k-1}$ with $k+2\leq r\leq t$. Based on the same proof shown in Case 1, we can get a contradiction with the definition of $T^M(t)$.
\end{proof}

A subset $S$ of node set $V(t)$ of a GSN-model $N(t)$ is called a \emph{dominating set} if every node of $V(t)$ is adjacent to a node of $S$ or belongs to $S$. By Theorem \ref{them:basic-maximum-leaf-trees} we can confirm the following results

\begin{thm} \label{them:kernel-balanced}
Every GSN-model $N(t)$ with $t\geq k+3$ and $k\geq 0$ holds:

$(i)$ every $V(k)$ is a proper connected $T^M$-balanced set and induces a connected kernel of $N(t)$;

$(ii)$ $V(k)\subset V(T^M_i(t))\cap V(T^M_j(t))$ for any two MLS-trees $T^M_i(t)$ and $T^M_j(t)$ of $N(t)$; and

$(iii)$ $V(t)\setminus L(T^M(t))$ is a connected dominating set of $N(t)$.
\end{thm}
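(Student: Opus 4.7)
The plan is to derive all three claims of Theorem \ref{them:kernel-balanced} from Theorem \ref{them:basic-maximum-leaf-trees}, whose key consequence is $V(k)\cap L(T^M(t))=\emptyset$ for every MLS-tree $T^M(t)$ of $N(t)$ whenever $t\geq k+3$. The combinatorial heavy lifting is therefore already done in the previous theorem, so the proofs reduce to book-keeping and a small amount of translation of definitions.

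I would first dispatch (ii), which is the most immediate: each $T^M_i(t)$ is a spanning tree of $N(t)$, so Theorem \ref{them:basic-maximum-leaf-trees} places every node of $V(k)$ among the internal vertices of both $T^M_i(t)$ and $T^M_j(t)$, and intersecting the two non-leaf vertex sets yields the stated containment. Turning to (i), the balanced condition (\ref{eqa:maxi-balanced-set-definition}) applied with $X=V(k)$ becomes the assertion that $|V(k)\cap L(T^M_i(t))|=0=|V(k)\cap L(T^M_j(t))|$ for every pair of MLS-trees, which is exactly the content of Theorem \ref{them:basic-maximum-leaf-trees}. Properness is clear: the cardinality formula (\ref{eqa:nodes-edges}) gives $|V(k)|<|V(t)|$ whenever $t>k$. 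For connectedness of the induced subgraph, I would note that each fractal-operation only creates edges incident to newly inserted nodes; hence no new edge between two nodes of $V(k)$ ever appears at a later step, so the subgraph of $N(t)$ induced on $V(k)$ is precisely $N(k)$, which is connected. The kernel claim then follows by transporting the introduction's definition to the deterministic setting: every node of $V(k)$ is an alltime-hub (never a leaf of any MLS-tree of $N(t)$), so the induced subgraph $N(k)$ is a connected kernel of $N(t)$.

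Finally for (iii), let $S=V(t)\setminus L(T^M(t))$ be the set of internal vertices of a fixed MLS-tree. Dominance is immediate: each leaf of $T^M(t)$ has its unique tree-neighbour inside $S$, and this incidence is inherited by the supergraph $N(t)\supseteq T^M(t)$; nodes in $S$ dominate themselves. For connectedness of $S$, I would invoke the elementary tree fact that any $T^M(t)$-path between two internal vertices avoids every leaf (else such a leaf would have tree-degree $\geq 2$, contradicting its being a leaf); thus $S$ induces a subtree of $T^M(t)$, and \emph{a fortiori} a connected subgraph of $N(t)$. The only mild obstacle I anticipate is interpretational rather than combinatorial: clarifying precisely what $X\cap T^M_i(t)$ in (\ref{eqa:maxi-balanced-set-definition}) is intersected with (either $L(T^M_i(t))$ or its complement in $V(t)$ is consistent with the paper's remark that $V(t)$ is trivially balanced), and transporting the stochastic notion of kernel from the introduction to a single-time-step GSN-model. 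Once these readings are fixed, no further combinatorial work is required beyond Theorem \ref{them:basic-maximum-leaf-trees}.
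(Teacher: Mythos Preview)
Your proposal is correct and mirrors the paper's own treatment: the paper offers no explicit proof of this theorem beyond the single line ``By Theorem \ref{them:basic-maximum-leaf-trees} we can confirm the following results,'' and you have faithfully unpacked exactly what that confirmation entails. Your handling of the interpretational ambiguities (what $X\cap T^M_i(t)$ means in (\ref{eqa:maxi-balanced-set-definition}), and how the introduction's stochastic notion of kernel transfers to a fixed $N(t)$) is appropriate, and your argument for (iii)---that internal vertices of a tree induce a connected subtree and that each leaf is dominated by its tree-neighbour---supplies precisely the elementary facts the paper leaves implicit.
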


Although the nodes with degrees$\geq k+1$ do not control more edges by Eq. (\ref{eqa:new-statistics}), but they are a controlling center in $N(t)$ with $t\geq 3+k$. By Theorem  \ref{them:kernel-balanced} we can find a maximal proper connected $T^M$-balanced set $X_1$ of $N(t)$, and then get a connected model $I_1$ induced over $X_1$. Next, $I_1$ has a  maximal proper connected $T^M$-balanced set $X_2$ that induces a connected model $I_2$. In this way, we obtain a sequence of models $I_1,I_2,\dots I_m$ such that $I_{i+1}\subset I_i$, and $X_{i+1}$ is a maximal kernel of $I_i$ for $i=1,2,\dots, m-1$. Clearly, $I_i\subset N(t-2i)$.

\subsection{A dynamic algorithm for finding MLS-trees of GSN-models}

There are some algorithms to find spanning trees of networks in Ref. \cite{Yao-Liu-Zhang-Chen-Zhang-Yao-Zhao2013}. We will apply the Bread-first Search algorithm (BFSA) introduced in Ref. \cite{Bondy-Murty-new} to make our Dynamic First-first BFSA algorithm (DFF-BFSA algorithm) by the motivation of the linear preferential attachment rule (Ref.\cite{Costa-Rodrigues-Travieso-Villas-Boas2006}). Predecessors' children are searched before successors' children, according to a rule of ``priority has priority''.

\begin{flushleft}
\textbf{DFF-BFSA algorithm}
\end{flushleft}

\textbf{Input:} A GSN-model $N(t)$ for $t\geq 0$.

\textbf{Output:} A spanning tree $T(t)$.

1. For the GSN-model $N(0)$, BFSA outputs a spanning tree $T(0)$ with $V(0)=\bigcup ^{m(0)}_{j=0}V^{(0)}_j$ and a level function $l$ such that $l(x)=j$ for $x\in V^{(0)}_j$, and the nodes of $V^{(0)}_j$ are ordered well by BFSA for $j=1,2,\dots ,m(0)$.

2. Let $\textrm{nei}(x,k)$ be the neighborhood of a node $x$ of $N(k)$ at time step $k$. At time step $k+1$, $V(k)=\bigcup^{k}_{l=0} \bigcup^{m(l)}_{j=m(l-1)+1} V^{(l)}_j$ (here, $m(-1)=-1$). Implementing BFSA \textbf{do}: For every ordered set $V^{(l)}_j=\{x^{(l)}_{j,1},x^{(l)}_{j,2},\dots ,x^{(l)}_{j,m(l,j)}\}$ with $l\leq k$, from $i=1$ to $i=m(l,j)$, scan $y\in \textrm{nei}(x^{(l)}_{j,i},k+1)\setminus \{w\in V(t): l(w)\textrm{ exists}\}$, $b\leftarrow l(x^{(l)}_{j,i})+1$, $l(y)\leftarrow m(k)+b$, and add $y$ to the ordered set $V^{(k+1)}_{m(k)+b}$ as the last node, add node $y$ and edge $yx^{(l)}_{j,i}$ to the spanning tree $T(k)$ in order to form new spanning tree $T(k+1)$; $m(k+1)\leftarrow \max \{m(k)+l(x)+1:x\in V(k)\}$, $V(k+1)\setminus V(k)\leftarrow \bigcup^{m(k+1)}_{j=m(k)+1} V^{(k+1)}_j$; $V(k+1)\leftarrow \bigcup^{k+1}_{l=0} \bigcup^{m(l)}_{j=m(l-1)+1} V^{(l)}_j$, go to 3.

3. If $k+1=t$, $T(t)\leftarrow T(k+1)$, go to 4; otherwise go to 2.

4. return $T(t)$ with a level function $l$.

\begin{figure}[h]
\centering
\includegraphics[height=6cm]{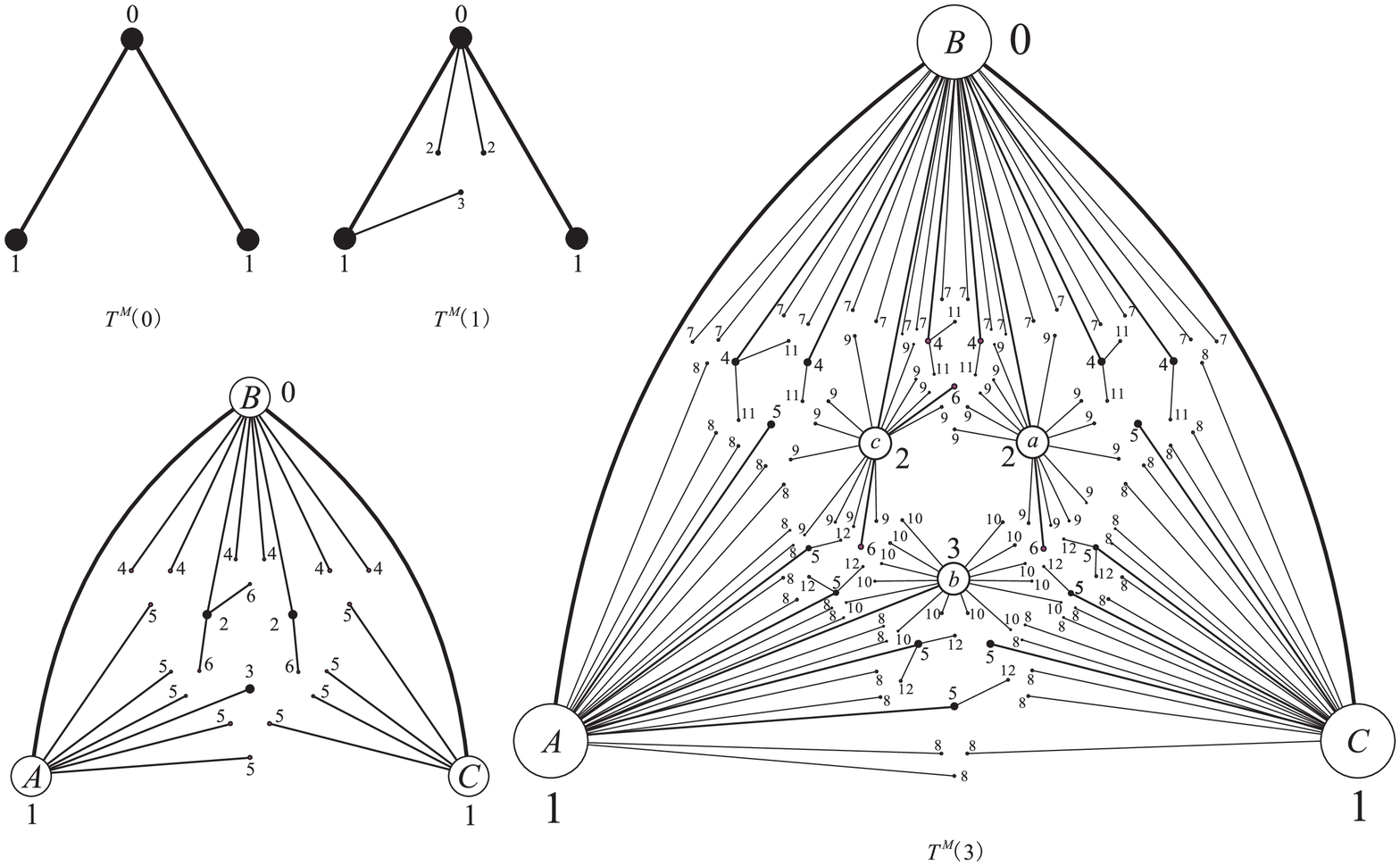}
\caption{The DFF-BFSA algorithm produces four MLS-trees with level functions in $N(k)$ for $k=0,1,2,3$.}
\label{fig.DFF-BFSA}
\end{figure}

\begin{thm} \label{them:build-maxi-leaf-spanning-trees}
The DFF-BFSA algorithm can find an MLS-tree $T$ of a GSN-model $N(t)$ such that $T$ has $19\cdot 6^{t-2}$ leaves, diameter $D(T)= 2t$ and maximum degree $\Delta(T)=\Delta(t)$ with $t\geq 2$.
\end{thm}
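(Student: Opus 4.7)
I approach the theorem by verifying the three assertions separately, inducting on $t$ and taking the BFS root to be a vertex $\alpha\in V(0)$ which by symmetry I may assume to be $A$. Because DFF-BFSA is a bona fide BFS (the priority rule only breaks ties within a layer), the tree $T(t)$ is a BFS tree of $N(t)$ rooted at $\alpha$, so its depth equals the eccentricity $e_{N(t)}(\alpha)$ and every graph-neighbor of $\alpha$ is a child of $\alpha$ in the tree. Since $\alpha\in V(0)$, the degree spectrum yields $\deg_{N(t)}(\alpha)=1+3^t=\Delta(t)$, and since $T(t)\subseteq N(t)$ no vertex can exceed this maximum, so $\Delta(T(t))=\Delta(t)$, settling the maximum-degree claim.

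For the diameter, I show by induction on $t\ge 2$ that $e_{N(t)}(\alpha)=t$. The base $t=2$ is verified directly: $A$ has $10$ neighbors covering every distance-$1$ vertex, and all remaining vertices of $N(2)$ are incident to one of these, hence within distance $2$ of $A$. For the inductive step, a fractal operation inserts vertices only in the interior of one triangular face, so distances between pre-existing vertices are unchanged in $N(t)$. A new vertex $v$ in a fractal-opped triangle $\Delta XYZ$ is adjacent to exactly two old corners $X',Y'$, giving $d_{N(t)}(A,v)=1+\min(d_{N(t-1)}(A,X'),d_{N(t-1)}(A,Y'))\le t$, hence $e\le t$. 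Equality follows from a non-monochromatic triangle of $N(t-1)$ whose two corners sit at distance $t-1$ from $A$; the recursive ``opposite-region'' Sierpinski structure provides such a triangle inductively. Since $\alpha$ has at least two subtrees of $T(t)$ reaching depth $t$, two leaves sit at tree-distance $2t$, so $D(T(t))=2t$.

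The leaf count is the heart of the theorem. By Theorem \ref{them:MLS-tree-leaves} every MLS-tree of $N(t)$ has $19\cdot 6^{t-2}$ leaves, so it suffices to show $|L(T(t))|\ge 19\cdot 6^{t-2}$ and conclude by maximality. I aim at the recursion $\ell_t=6\ell_{t-1}$ for $t\ge 3$ with base $\ell_2=19$ checked against Figure \ref{fig.DFF-BFSA}. In the transition $k\to k+1$, each of the $3\cdot 6^k$ new vertices has both its old neighbors in $V(k)$; because DFF-BFSA scans only $V(k)$ at step $k+1$, every new vertex is attached as a child of a $V(k)$-vertex and so is a leaf of $T(k+1)$. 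A $T(k)$-leaf stays a leaf of $T(k+1)$ unless it claims a new child, and by the DFF priority rule, in each fractal-opped triangle $\Delta XYZ$ the earliest corner takes $2$ new children, the middle corner $1$, and the latest $0$. Writing $c_k$ for the number of $T(k)$-leaves promoted, the recursion becomes $\ell_{k+1}=\ell_k+3\cdot 6^k-c_k$, so the target $\ell_{k+1}=6\ell_k$ is equivalent to $c_k=13\cdot 6^{k-2}$.

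The main obstacle is establishing this exact count $c_k=13\cdot 6^{k-2}$. Whether a given $T(k)$-leaf wins ``earliest'' or ``middle'' status in a surrounding triangle is controlled by the interplay between the DFF ordering and the Sierpinski labelling $f$, so some careful bookkeeping is unavoidable. I anticipate needing an auxiliary invariant such as: for every $k\ge 2$, a vertex $v$ with $f(v)=k$ is the DFF-latest corner in every non-monochromatic triangle containing it. This would reduce the promotions to specific middle-corner incidences among vertices of lower label and would let a straightforward inductive count over the $6^{k-1}$ smallest Sierpinski sub-regions deliver the value $13\cdot 6^{k-2}$. Combining the three paragraphs above then establishes all parts of the theorem.
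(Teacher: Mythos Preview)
Your handling of the maximum degree and the diameter is essentially parallel to the paper's: the paper roots the search at $B$ rather than $A$ and bounds the depths of the two subtrees hanging off the non-root vertices of $V(0)$ by $t-1$ instead of computing the eccentricity of the root directly, but by the symmetry of $V(0)$ these are cosmetic differences. The paper also spends a paragraph showing that the output is acyclic and spanning, which you absorb into the sentence ``DFF-BFSA is a bona fide BFS''; that is fine.

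The leaf count is where the two arguments diverge, and where your proposal has a real gap. The paper does not set up a promotion recursion at all: after checking $t\le 3$ against Figure~\ref{fig.DFF-BFSA} it simply asserts the count for $t\ge 4$ by appeal to Theorem~\ref{them:basic-maximum-leaf-trees}, so your recursive approach is genuinely different and, if completed, would be more informative. However, the auxiliary invariant you propose --- that a vertex $v$ with $f(v)=k$ is DFF-latest in \emph{every} non-monochromatic triangle of $N(k)$ containing it --- is false as stated. Half of the $6^{k}$ non-monochromatic faces of $N(k)$ are of the form $\Delta Yxz$ with one old corner $Y$ and two corners $x,z$ both satisfying $f(x)=f(z)=k$; the invariant cannot hold for both $x$ and $z$ in such a face. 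What actually decides which of $x,z$ is latest is the DFF level assigned at step~$k$, namely $l(x)=m(k-1)+l(Z)+1$ versus $l(z)=m(k-1)+l(X)+1$, so the tie is broken by the DFF order of the \emph{old} corners $X,Z$ one step earlier. Your invariant therefore only gives you that the latest corner in every face is new, which pins down the ``$0$'' in your $2$--$1$--$0$ pattern but not which new vertex it is; to reach $c_k=13\cdot 6^{k-2}$ you still have to track, for each inserted triple $x,y,z$, which of them inherits ``latest'' status in the three mixed faces, and then propagate that information forward. Without that refinement the recursion does not close.
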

\begin{proof} Suppose that there is a cycle $C=x_1x_2\cdots x_mx_1$ in the graph $T$ obtained by the DFF-BFSA algorithm. Without loss of generality, the level values $l(x_1)<l(x_i)$ for $x_i\neq x_1$ in $C$. So there are three nodes $u,v,w$ such that $l(u)<l(w)$ and $l(v)<l(w)$, which mean that $w\in \textrm{nei}(u)\setminus \textrm{nei}(s)$ and $w\in \textrm{nei}(v)\setminus N(k)$, but it is impossible since $(\textrm{nei}(u)\setminus \textrm{nei}(s))\cap (N(v)\setminus N(k))=\emptyset$. Thereby, $T$ contains no cycle (Ref. $y\in \textrm{nei}(x^{(l)}_{j,i},k+1)\setminus \{w\in V(t): l(w)\textrm{ exists}\}$). Notice that $N(t)$ is connected, and at each time step the DFF-BFSA algorithm scans all neighbors of a node. So, $T$ is connected and a spanning tree.

We select the first node $u_0=B$ in $N(2)$, so $l(u_0=B)=0$ and degree $k(u_0)$ is equal to maximum degree  $\Delta(k)=3^{k}+1$  of $N(k)$ for $2\leq k\leq t$ by the DFF-BFSA algorithm. Notice that $l(A)=l(C)=1$, so $|k(A)-k(C)|=1$ and $\max\{k(A),k(C)\}=k(u_0)-1=3^{t}$. Three nodes $A,B,C$ of the spanning tree $T$ control other $3^{k+1}+1$ nodes  of $N(k)$. For $t=0,1,2$, it is not hard to see diameters $D(T)=2t$. For $t\geq 3$, by the DFF-BFSA algorithm, every path $P(A,w)$ from node $A$ to a leaf $w$ has at most length $(t-1)$ if it does not pass through node $B$, and each path $P(C,w')$ from node $C$ to a leaf $w'$ has at most length $(t-1)$ if it does not pass through node $B$. Thereby, the path from $w$ to $w'$ has length $2t$, which means $D(T)=2t$.

Notice that the spanning trees have $19\cdot 6^{t-2}$ leaves for $t=0,1,2,3$ (Ref. Figure \ref{fig.DFF-BFSA}). We can confirm that for $t\geq 4$, every spanning tree of $N(t)$ obtained by the DFF-BFSA algorithm has $19\cdot 6^{t-2}$ leaves according to Theorem \ref{them:basic-maximum-leaf-trees}. The theorem is covered.
\end{proof}

\section{Conclusion}

For determining the kernels of GSN-models we focus on MLS-trees of GSN-models, and show the structures of some MLS-trees by our algorithms. Clearly, all MLS-trees $T^M(i+1)$ having the shortest diameter $2(i+1)$ can be constructed by our generalized MLS-TREE algorithm over all MLS-trees having shortest diameter $D(T^M(i))=2i$ in $M_{st}(i)$. Although our DFF-BFSA algorithm can not find spanning trees having maximal leaves in any growing network model, however, we verify it for some growing network models, and find out some interesting spanning trees.

Suppose that a triangle $\Delta xbc$ and another triangle $\Delta bcy$ have a common edge $bc$ in a maximal planar graph $G$ whose faces are triangular. We remove the edge $bc$ and then join $x$ with $y$ by an edge, the resulting is still a maximal planar graph, written as $G'$ and say `flipping the edge $bc$'. We call the procedure of obtaining $G'$ from $G$ a \emph{flip operation}. Note that  every GSN-model $N(t)$ is a maximal planar graph, and ``Any pair of maximal planar graphs on $n$ vertices can be transformed into each other by at most $5.2n-24.4$ flip operations (Ref. \emph{\cite{Mori-Nakamoto-Ota2003}}).''  We propose a problem: \emph{For what value of a positive integer $m$, does rewiring $m$ edges of a GSN-model $N(t)$ by the flip operation produce a scale-free network model?} For larger integers $t>0$, we guess that a maximal planar graph $H^*$ obtained from a  GSN-model $N(t)$ by flipping some edges having ends in $V(t)\setminus V(t-3)$ is scale-free. The above problem leads to a problem of graph theory: \emph{Determine finite maximal planar graphs $G_1,G_2,\dots $ such that $G_{t-1}$ is a proper subgraph of $G_t$ and each $G_t$ obeys a power law distribution}.

\vskip 0.6cm

\textbf{Acknowledgments}\quad
This research was supported by the National Natural Science
Foundation of China under Grant Nos. 61163054, 61363060 and 61163037.

{\footnotesize

}


\begin{thebibliography}{9}

\setlength{\parskip}{0pt}

\bibitem{A-L-Barabasi-R-Albert1999}A.-L. Barab\'{a}si and R. Albert. Emergence of
scaling in random networks. \emph{Science}, \textbf{286} (1999), 509-512.

\bibitem{Bollobas-Riordan2004}B\'{e}la Bollob\'{a}s, Oliver Riordan, The Diameter Of A Scale-Free Random Graph, Combinatorica \textbf{24} (1) (2004) 5-34

\bibitem{Bondy-Murty-new} J.A. Bondy and U.S.R. Murty. Graph
Theory. \emph{Springer}, ISBN: 978-1-84628-969-9, e-ISBN:
978-1-84628-970-5, DOI: 10.1007/978-1-84628-970-5, 2008.


\bibitem{Costa-Rodrigues-Travieso-Villas-Boas2006}L. da F. Costa F. A. Rodrigues G. Travieso
P. R. Villas Boas, Characterization of Complex Networks:
A Survey of measurements, arXiv:cond-mat/0505185v5 [cond-mat.dis-nn] 16 Aug 2006.


\bibitem{Kim-Noh-Jeong2004}Dong-Hee Kim, Jae Dong Noh, and Hawoong Jeong, Scale-free trees: The skeletons of complex networks, Physical Review E \textbf{70}, 046126 (2004).


\bibitem{Mori-Nakamoto-Ota2003} Mori, R., Nakamoto, A., and Ota, K. Diagonal flips in Hamiltonian triangulations on the sphere. Graphs Combin. \textbf{19}, 3 (2003), 413-418.

\bibitem{Fernau-Kneis-Kratsch-Langer-Liedloff-Raible-Rossmanith2011}Henning Fernau, Joachim Kneis, Dieter Kratsch, Alexander Langer, Mathieu Liedloff,
Daniel Raible, Peter Rossmanith, An exact algorithm for the Maximum Leaf Spanning Tree problem, Theoretical Computer Science \textbf{412} (2011) 6290-6302.

\bibitem{Garey-Johnson1979} M. Garey, D. Johnson, Computers and Intractability: A Guide to the Theory of NP-completeness, Freeman, 1979.
\bibitem{M-E-J-Newman2003} M. E. J. Newman, The structure and function of complex networks, SIAM Review \textbf{45}, 167 (2003),167-256.

\bibitem{M-E-J-Newman2008}M. E. J. Newman, Mathematics of networks, in The New Palgrave Encyclopedia of Economics, 2nd edition, L. E. Blume and S. N. Durlauf (eds.), Palgrave Macmillan, Basingstoke (2008).

\bibitem{Yao-Liu-Zhang-Chen-Zhang-Yao-Zhao2013}Bing Yao, Xia Liu, Wan-jia Zhang, Xiang-en Chen, Xiao-min Zhang, Ming Yao, Zheng-xue Zhao. Applying Graph Theory To The Internet of Things. 2013 IEEE International Conference on High Performance Computing and Communications and 2013 IEEE International Conference on Embedded and Ubiquitous Computing, 2354-2361. DOI: 10.1109/HPCC.and.EUC.2013.339

\bibitem{Yao-Zhang-Wang2010}Bing Yao, Zhong-fu Zhang, and Jian-fang Wang, Some results on
spanning trees, Acta Mathematicae Applicatae Sinica, English Series
\textbf{26} (4) (2010) 607-616.


\bibitem{Zhang-Zhou-Fang-Guan-Zhang2007}Zhongzhi Zhang, Shuigeng Zhou, Lujun Fang, Jihong Guan and Yichao Zhang. Maximal planar scale-free Sierpinski networks with small-world effect and power-law strength-degree correlation. EPL (Europhysics Letters), 2007, \textbf{79}: 38007.

%
\end{thebibliography}
\end{document}